\def\BibTeX{{\rm B\kern-.05em{\sc i\kern-.025em b}\kern-.08em
    T\kern-.1667em\lower.7ex\hbox{E}\kern-.125emX}}
\newtheorem{lemma}{Lemma}
\begin{document}

\title{Ergodic Mutual Information for Generalized Fadings\\
}


\author{Chongjun~Ouyang,~\IEEEmembership{Student Member,~IEEE}, 
        Sheng~Wu,
	   Zeliang~Ou, 
	   Pei~Yang,
	   Lu~Zhang,
        Xin~Zhang,
and Hongwen~Yang,~\IEEEmembership{Member,~IEEE}    
\thanks{C. Ouyang, S. Wu, Z. Ou, P. Yang, L. Zhang, X. Zhang and H. Yang are with the School of Information and Communication Engineering, Beijing University of Posts and Telecommunications, Beijing 100876, China. (E-mail: \{DragonAim, thuraya, ouzeliang, yp, zhangl\_96, zhangxin, yanghong\}@bupt.edu.cn)}
}


\maketitle

\begin{abstract}

In this paper, novel expressions are derived to evaluate the ergodic mutual information (EMI) under BPSK modulation of single-input single-output (SISO) systems operating in generalized fading channels, including $\eta$-$\mu$ fading and $\kappa$-$\mu$ fading. To verify our derivation, we first investigate the specific fading types, namely Rayleigh, Nakagami-$m$ and Rician, and then turn to generalized fading scenarios. It is shown that all the expressions concluded from the generalized cases can be specialized into those derived from the specific ones. Different from the conventional results of the EMI, our developed expressions contain only the simplest numerical calculations, without any Meijer's G-functions which must be implemented in the particular computing software. Additionally, it should be noted that our work provides a complete analysis of the EMI in wireless channel under discrete inputs.

\end{abstract}

\begin{IEEEkeywords}
Rayleigh, Nakagami-$m$, Rician, $\eta$-$\mu$, $\kappa$-$\mu$
\end{IEEEkeywords}

\section{Introduction}
Ergodic mutual information (EMI) is an important performance measurement metric in wireless multi-path channels, indicating the maximal achievable transmission rate. Many researches on the EMI in different fading scenarios have been presented in the past decades, including Rayleigh  \cite{c1}, Nakagami-$m$  \cite{c2}, Rician  \cite{c3} fading channels, and  generalized fading channels, namely $\eta$-$\mu$ \cite{c4} and $\kappa$-$\mu$  \cite{c5}. However, all of these works only considered the simplest scenario when the input signals followed Gaussian distribution. In fact, a practical implementation is the case where the channel inputs are drawn from finite alphabet, which is necessary to be studied. Although the works in \cite{c6} have derived the recursive expression for the EMI under BPSK in Nakagami-$m$ fading channels, the computation complexity is prohibitively high for large $m$.

In this paper, we derive accurate and neat expressions for the ergodic mutual information of single-input single-output (SISO) systems under BPSK operating both in specific fading channels, including Rayleigh, Nakagami-$m$ and Rician, and generalized fading channels including $\eta$-$\mu$ and $\kappa$-$\mu$. It is difficult to derive the closed-form exact formulas of EMI. Instead we develop some simple closed-form approximate expressions with high precision to estimate the maximal achievable transmission rate. Notably, our proposed expressions encompass no complicated structures, such as the commonly-used Meijer's G-functions in the conventional works \cite{c4,c5}. Besides, our derived results of the generalized fading channels can accord with those of specific fadings, which reveal the feasibility and validity of our developed formulas. To the best of our knowledge, this is the first time to propose a comprehensive analysis for the ergodic mutual information with finite-alphabet inputs under generalized multi-path fading types.


\section{Derivation of Mutual Information}
Denote $f\left(\cdot\right)$ as the probability density function (PDF) of the instantaneous Signal to Noise Ratio (SNR) per symbol. Then, the ergodic mutual information is defined as
\begin{equation}
\label{eq1}
C\triangleq\int_{0}^{+\infty}{\mathcal{I}}\left(\gamma\right)f\left(\gamma\right){\rm{d}}\gamma,
\end{equation}
where ${\mathcal{I}}\left(\cdot\right)$ represents the instantaneous input-output mutual information as a function of the SNR $\gamma$.

\label{section2}
\subsection{AWGN}
Assume that the transmitted data stream are independent and identically distributed (i.i.d.) zero-mean binary symbols with equal probabilities, and then the input-output mutual information in terms of the SNR $\gamma$ under BPSK modulation over additive white Gaussian noise (AWGN) channels can be formulated as \cite{b1}
\begin{equation}
\label{eq2}
{\mathcal{I}}\left(\gamma\right)=1-\int_{-\infty}^{+\infty}{\frac{1}{\sqrt{2\pi}}{\rm{e}}^{-\frac{u^2}{2}}\log_2\left(1+{\rm{e}}^{-2\sqrt{\gamma}u-2\gamma}\right){\rm{d}}u}.
\end{equation}
An approximate formula of ${\mathcal{I}}\left(\gamma\right)$, with compact form and high accuracy, is expressed as \cite{b4}
\begin{equation}
\label{eq3}
{\mathcal{I}}\left(\gamma\right)\approx1-{\rm{e}}^{-\vartheta\gamma},
\end{equation}
where $\vartheta=0.6507$. The precision of Equ. \eqref{eq3} will be examined later.  

\subsection{Rayleigh Fading Scenario}
The PDF of received SNR in Rayleigh fading channels can be written as 
\begin{equation}
\label{eq4}
f\left(\gamma\right)=\frac{1}{\bar\gamma}{\rm{e}}^{-\frac{\gamma}{\bar\gamma}},
\end{equation}
where $\bar\gamma$ denotes the average SNR.
\begin{lemma}
The ergodic mutual information in Rayleigh fading channels is approximated as
\begin{equation}
\label{eq5}
{\hat{C}}=1-\frac{1}{1+\vartheta\bar\gamma}.
\end{equation} 
\end{lemma}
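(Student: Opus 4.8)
The plan is to substitute the approximation \eqref{eq3} for the instantaneous mutual information directly into the definition \eqref{eq1} of the ergodic mutual information, using the Rayleigh SNR density \eqref{eq4}. That is, I would write
\begin{equation}
{\hat C} = \int_{0}^{+\infty}\left(1-{\rm{e}}^{-\vartheta\gamma}\right)\frac{1}{\bar\gamma}{\rm{e}}^{-\frac{\gamma}{\bar\gamma}}\,{\rm{d}}\gamma,
\end{equation}
which splits into two elementary integrals. The first is just the integral of the density, contributing $1$; the second is $\frac{1}{\bar\gamma}\int_{0}^{+\infty}{\rm{e}}^{-(\vartheta+1/\bar\gamma)\gamma}\,{\rm{d}}\gamma = \frac{1}{\bar\gamma}\cdot\frac{1}{\vartheta+1/\bar\gamma} = \frac{1}{1+\vartheta\bar\gamma}$. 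Subtracting gives exactly \eqref{eq5}.

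The key steps in order are: (i) invoke \eqref{eq3} to replace ${\mathcal I}(\gamma)$ inside the integral, noting that this is where the ``approximately equal'' in the lemma comes from; (ii) insert \eqref{eq4}; (iii) separate the integrand into the constant term and the exponential term; (iv) evaluate $\int_0^{+\infty}{\rm e}^{-a\gamma}\,{\rm d}\gamma = 1/a$ with $a = \vartheta + 1/\bar\gamma > 0$ to ensure convergence; (v) simplify the resulting rational expression. No special functions are needed, which is precisely the point the paper is making in contrast to the Meijer's G-function approach.

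Honestly, there is no serious obstacle here: the only thing worth remarking on is that the result is an \emph{approximation}, inheriting its error entirely from \eqref{eq3}, so strictly speaking one should note that the substitution step is the sole source of inexactness and that the remaining integration is exact. If one wanted to be careful, one could bound $|{\hat C} - C|$ by $\int_0^\infty |{\mathcal I}(\gamma) - (1-{\rm e}^{-\vartheta\gamma})| f(\gamma)\,{\rm d}\gamma$, but since the paper defers the accuracy discussion of \eqref{eq3} to later (``The precision of Equ.~\eqref{eq3} will be examined later''), I would simply present the closed-form evaluation and leave the precision claim to the numerical section.
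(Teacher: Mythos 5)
Your proposal is correct and follows exactly the same route as the paper: substitute the exponential approximation \eqref{eq3} into the definition \eqref{eq1} with the Rayleigh density \eqref{eq4}, split off the unit mass of the density, and evaluate the remaining elementary exponential integral to obtain $1-\frac{1}{1+\vartheta\bar\gamma}$. Your added remarks on convergence ($a=\vartheta+1/\bar\gamma>0$) and on the approximation error being inherited solely from \eqref{eq3} are sound and consistent with the paper's deferral of the precision discussion to the simulation section.
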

\begin{proof}
On the basis of Equ. \eqref{eq3} and Equ. \eqref{eq4}, the ergodic mutual information is derived as follows: 
\begin{equation}
\label{eq6}
\begin{split}
{C}&=\int_{0}^{+\infty}{\mathcal{I}}\left(\gamma\right)f\left(\gamma\right){\rm{d}}\gamma\approx1-\int_{0}^{+\infty}{\rm{e}}^{-\vartheta\gamma}\frac{1}{\bar\gamma}{\rm{e}}^{-\frac{\gamma}{\bar\gamma}}{\rm{d}}\gamma\\
&=1-\frac{1}{1+\vartheta\bar\gamma}.
\end{split}
\end{equation} 
\end{proof}

\subsection{Nakagami-$m$ Fading Scenario}
The PDF of received SNR under Nakagami-$m$ fading is given by
\begin{equation}
\label{eq7}
f\left(\gamma\right)=\frac{m^m\gamma^{m-1}}{\Gamma\left(m\right)\bar\gamma^m}{\rm{e}}^{-\frac{m\gamma}{\bar\gamma}}.
\end{equation}
where $\Gamma\left(x\right)=\int_{0}^{+\infty}t^{x-1}{\rm{e}}^{-t}{\rm{d}}t$ denotes the Gamma function.
\begin{lemma}
The ergodic mutual information under Nakagami-$m$ fading is approximated as
\begin{equation}
\label{eq8}
{\hat{C}}=1-\left(\frac{m}{m+\vartheta\bar\gamma}\right)^m.
\end{equation}
\end{lemma}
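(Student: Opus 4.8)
The plan is to follow exactly the same route as in the proof of the Rayleigh case (Equ.~\eqref{eq6}): substitute the compact approximation \eqref{eq3} for ${\mathcal{I}}\left(\gamma\right)$ into the definition \eqref{eq1} of the EMI, together with the Nakagami-$m$ SNR density \eqref{eq7}, and reduce the resulting integral to a single elementary evaluation.

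Concretely, first I would pull the $\gamma$-independent constants out of the integral and merge the two exponentials, writing
\begin{equation}
{C}\approx1-\int_{0}^{+\infty}{\rm{e}}^{-\vartheta\gamma}\frac{m^m\gamma^{m-1}}{\Gamma\left(m\right)\bar\gamma^m}{\rm{e}}^{-\frac{m\gamma}{\bar\gamma}}{\rm{d}}\gamma=1-\frac{m^m}{\Gamma\left(m\right)\bar\gamma^m}\int_{0}^{+\infty}\gamma^{m-1}{\rm{e}}^{-\left(\vartheta+\frac{m}{\bar\gamma}\right)\gamma}{\rm{d}}\gamma .
\end{equation}
The only computational step is then to recognize the remaining integral as a Gamma integral: with the substitution $t=\left(\vartheta+\frac{m}{\bar\gamma}\right)\gamma$ one obtains $\int_{0}^{+\infty}\gamma^{m-1}{\rm{e}}^{-\left(\vartheta+\frac{m}{\bar\gamma}\right)\gamma}{\rm{d}}\gamma=\Gamma\left(m\right)\big/\big(\vartheta+\frac{m}{\bar\gamma}\big)^m$, which is valid for every $m>0$ (so no integer restriction on $m$ is needed) because $\vartheta>0$ and $\bar\gamma>0$ make the decay rate strictly positive. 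Substituting back and simplifying $\frac{m^m}{\bar\gamma^m}\big/\big(\vartheta+\frac{m}{\bar\gamma}\big)^m=\big(\frac{m}{m+\vartheta\bar\gamma}\big)^m$ gives \eqref{eq8}.

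There is essentially no hard part here; the proof is a one-line Laplace-type integral, and the only thing to be careful about is the bookkeeping that makes the constant $m^m/(\Gamma(m)\bar\gamma^m)$ cancel cleanly against the $\Gamma(m)/(\vartheta+m/\bar\gamma)^m$ coming out of the integral. If one wishes to be explicit about the fact that \eqref{eq8} is an approximation rather than an exact identity, I would simply note that its error is inherited directly from the pointwise error of \eqref{eq3}, whose accuracy the paper defers to the later numerical examination. As a consistency check, setting $m=1$ in \eqref{eq8} reproduces the Rayleigh result \eqref{eq5}, as it should.
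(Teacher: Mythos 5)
Your proposal is correct and follows essentially the same route as the paper: substitute the approximation \eqref{eq3} and the Nakagami-$m$ PDF \eqref{eq7} into \eqref{eq1}, reduce to the integral in \eqref{eq9}, and evaluate it as a Gamma integral, which the paper does by citing \cite[Equ. (3.326.2)]{b2} while you carry out the substitution $t=\left(\vartheta+\frac{m}{\bar\gamma}\right)\gamma$ explicitly. The added remarks on validity for all $m>0$ and the $m=1$ consistency check with \eqref{eq5} are fine but do not change the argument.
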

\begin{proof}
On the basis of Equ. \eqref{eq3} and Equ. \eqref{eq7},
\begin{equation}
\label{eq9}
\begin{split}
C&\approx1-\int_{0}^{+\infty}{\rm{e}}^{-\vartheta\gamma}\frac{m^m\gamma^{m-1}}{\Gamma\left(m\right)\bar\gamma^m}{\rm{e}}^{-\frac{m\gamma}{\bar\gamma}}{\rm{d}}\gamma\\
&=1-\frac{m^m}{\Gamma\left(m\right)\bar\gamma^m}\int_{0}^{+\infty}{\rm{e}}^{-\frac{m\gamma}{\bar\gamma}-\vartheta\gamma}\gamma^{m-1}{\rm{d}}\gamma.
\end{split}
\end{equation}
By \cite[Equ. (3.326.2)]{b2}, the final result can be summarized as
\begin{equation}
\label{eq10}
C\approx1-\left(\frac{m}{m+\vartheta\bar\gamma}\right)^m.
\end{equation} 
\end{proof}

\subsection{Rician Fading Scenario}
The PDF of received SNR under Rician fading is given by
\begin{equation}
\label{eq11}
\begin{split}
&f\left(\gamma\right)\\
&=\frac{1+K}{\bar\gamma}\exp\left(-K-\frac{\left(1+K\right)\gamma}{\bar\gamma}\right)I_0\left(2\sqrt{\frac{K\left(K+1\right)}{\bar\gamma}\gamma}\right),
\end{split}
\end{equation}
where $I_0\left(x\right)=\sum_{n=0}^{+\infty}\frac{\left(x/2\right)^{2n}}{n!\Gamma\left(n+1\right)}$ is the 0-th order modified Bessel function of the first kind.
\begin{lemma}
The ergodic mutual information under Rician fading is approximated as
\begin{equation}
\label{eq12}
{\hat{C}}=1-\frac{1}{1+\frac{\vartheta{\bar\gamma}}{K+1}}\exp\left(\frac{K}{1+\frac{\vartheta{\bar\gamma}}{K+1}}-K\right).
\end{equation}
\end{lemma}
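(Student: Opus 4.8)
The plan is to mimic the proofs of the preceding lemmas: substitute the approximation \eqref{eq3} and the Rician density \eqref{eq11} into the definition \eqref{eq1}. Since $\int_{0}^{+\infty}f\left(\gamma\right){\rm{d}}\gamma=1$, the problem reduces to evaluating
\begin{equation}
\label{eqplan1}
J\triangleq\int_{0}^{+\infty}{\rm{e}}^{-\vartheta\gamma}\frac{1+K}{\bar\gamma}\exp\left(-K-\frac{\left(1+K\right)\gamma}{\bar\gamma}\right)I_0\left(2\sqrt{\frac{K\left(K+1\right)}{\bar\gamma}\gamma}\right){\rm{d}}\gamma,
\end{equation}
after which, writing ${\hat{C}}$ for the resulting approximation, ${\hat{C}}=1-J$. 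Pulling the $\gamma$-independent factor ${\rm{e}}^{-K}\left(1+K\right)/\bar\gamma$ outside and merging the two exponentials, $J$ takes the canonical form ${\rm{e}}^{-K}\frac{1+K}{\bar\gamma}\int_{0}^{+\infty}{\rm{e}}^{-p\gamma}I_0\left(2\sqrt{a\gamma}\right){\rm{d}}\gamma$ with $p=\vartheta+\frac{1+K}{\bar\gamma}>0$ and $a=\frac{K\left(K+1\right)}{\bar\gamma}$.

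The key step is the Laplace-type identity $\int_{0}^{+\infty}{\rm{e}}^{-p\gamma}I_0\left(2\sqrt{a\gamma}\right){\rm{d}}\gamma=\frac{1}{p}{\rm{e}}^{a/p}$. This can be quoted from a table such as \cite{b2}, or obtained directly from the series for $I_0$ given in \eqref{eq11}: writing $I_0\left(2\sqrt{a\gamma}\right)=\sum_{n\ge0}\frac{\left(a\gamma\right)^n}{\left(n!\right)^2}$, integrating term by term via $\int_{0}^{+\infty}\gamma^n{\rm{e}}^{-p\gamma}{\rm{d}}\gamma=n!/p^{n+1}$, and resumming gives $\frac{1}{p}\sum_{n\ge0}\frac{\left(a/p\right)^n}{n!}=\frac{1}{p}{\rm{e}}^{a/p}$. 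Substituting back, the prefactor collapses because $\frac{1+K}{\bar\gamma}\cdot\frac{1}{p}=\frac{1+K}{\vartheta\bar\gamma+1+K}=\frac{1}{1+\vartheta\bar\gamma/\left(K+1\right)}$, while the exponent becomes $\frac{a}{p}=\frac{K\left(K+1\right)}{\vartheta\bar\gamma+1+K}=\frac{K}{1+\vartheta\bar\gamma/\left(K+1\right)}$; combining with the ${\rm{e}}^{-K}$ factor yields exactly \eqref{eq12}.

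I do not anticipate any genuine obstacle. The only mild care needed is to keep the argument of the Bessel function in the normalization $2\sqrt{a\gamma}$ so that the integral identity applies verbatim, and to carry out the bookkeeping that rewrites $\frac{1+K}{\vartheta\bar\gamma+1+K}$ and $\frac{K\left(K+1\right)}{\vartheta\bar\gamma+1+K}$ in the stated normalized form. As a consistency check one may let $K\to0$: then \eqref{eq12} collapses to $1-\frac{1}{1+\vartheta\bar\gamma}$, recovering the Rayleigh result \eqref{eq5}, which I would mention as corroboration.
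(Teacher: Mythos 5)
Your proposal is correct and follows essentially the same route as the paper: the Laplace-type identity $\int_{0}^{+\infty}{\rm{e}}^{-p\gamma}I_0\left(2\sqrt{a\gamma}\right){\rm{d}}\gamma=\frac{1}{p}{\rm{e}}^{a/p}$ that you invoke is exactly what the paper derives in-line by expanding $I_0$ in its power series, integrating term by term via \cite[Equ.\ (3.326.2)]{b2}, and resumming the exponential series, with the same parameters $p=\vartheta+\frac{1+K}{\bar\gamma}$ and $a=\frac{K\left(K+1\right)}{\bar\gamma}$. Your closing bookkeeping and the $K\to0$ consistency check are both sound.
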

\begin{proof}
On the basis of Equ. \eqref{eq3} and Equ. \eqref{eq11},
\begin{equation}
\label{eq13}
\begin{split}
C\approx&1-\int_{0}^{+\infty}\frac{1+K}{\bar\gamma}\exp\left(-K-\frac{\left(1+K\right)\gamma}{\bar\gamma}-\vartheta\gamma\right)\\
&\times I_0\left(2\sqrt{\frac{K\left(K+1\right)}{\bar\gamma}\gamma}\right){\rm{d}}\gamma\\
=&1-\sum_{{n=0}}^{+\infty}\frac{\left(K+1\right)^{n+1}}{\bar\gamma^{n+1}{n!}}\frac{{\rm{e}}^{-K}K^n}{\Gamma\left(n+1\right)}\int\limits_{0}^{+\infty}\gamma^n{\rm{e}}^{-\left(\vartheta+\frac{1+K}{\bar\gamma}\right)\gamma}{\rm{d}}\gamma\\
\overset{\left(a\right)}{=}&1-\sum_{n=0}^{+\infty}\frac{\left(K+1\right)^{n+1}}{\bar\gamma^{n+1}n!}\frac{{\rm{e}}^{-K}K^n}{\Gamma\left(n+1\right)}\frac{\Gamma\left(n+1\right)}{\left(\vartheta+\frac{1+K}{\bar\gamma}\right)^{n+1}}\\
=&1-\frac{{\rm{e}}^{-K}}{1+\frac{\vartheta{\bar\gamma}}{K+1}}\sum_{n=0}^{+\infty}\frac{1}{n!}\left(\frac{K}{1+\frac{\vartheta{\bar\gamma}}{K+1}}\right)^n\\
\overset{\left(b\right)}{=}&1-\frac{1}{1+\frac{\vartheta{\bar\gamma}}{K+1}}\exp\left(\frac{K}{1+\frac{\vartheta{\bar\gamma}}{K+1}}-K\right),
\end{split}
\end{equation}
where the step ``(a)" is based on \cite[Equ. (3.326.2)]{b2} and the step ``(b)" is based on the Taylor series expansion, $\exp(x)=\sum_{n=0}^{+\infty}\frac{1}{n!}x^n$. 
\end{proof}

\subsection{$\eta$-$\mu$ Fading Scenario}
The PDF of received SNR under {$\eta$-$\mu$ fading is given by \cite{b5}
\begin{equation}
\label{eq14}
f\left(\gamma\right)=\frac{2\sqrt{\pi}\mu^{\mu+\frac{1}{2}}h^\mu}{\Gamma\left(\mu\right)H^{\mu-\frac{1}{2}}}\frac{\gamma^{\mu-\frac{1}{2}}}{\bar\gamma^{\mu+\frac{1}{2}}}\exp\left(-\frac{2\mu\gamma h}{\bar\gamma}\right)I_{\mu-\frac{1}{2}}\left(\frac{2\mu H\gamma}{\bar\gamma}\right),
\end{equation}
where $I_\nu\left(z\right)=\left(\frac{1}{2}z\right)^\nu\sum_{n=0}^{+\infty}\frac{\left(\frac{1}{4}z^2\right)^{n}}{n!\Gamma\left(\nu+n+1\right)}$ is the modified Bessel function of the first kind.
It is well known that the $\eta$-$\mu$ fading holds two formats depending on the correlation in the main cluster. Additionally, $h$ and $H$ are functions of $\eta$ and vary from one format to another. More specifically, in Format 1, $0<\eta<\infty$, $h=\left(2+\eta^{-1}+\eta\right)/4$ and $H=\left(\eta^{-1}-\eta\right)/4$, whereas, in Format 2, $-1<\eta<1$, $h = 1/\left(1-\eta^2\right)$ and $H = \eta/\left(1-\eta^2\right)$.
\begin{lemma}
The ergodic mutual information under $\eta$-$\mu$ fading is approximated as
\begin{equation}
\label{eq15}
\begin{split}
{\hat{C}}=1-\left(\frac{h}{\left(h+\frac{\vartheta\bar\gamma}{2\mu}\right)^2-H^2}\right)^{\mu}.
\end{split}
\end{equation} 
\end{lemma}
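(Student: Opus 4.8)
The plan is to follow the same route as the proof of Lemma~3, since the $\eta$-$\mu$ density \eqref{eq14} is, like the Rician one, an exponential multiplied by a modified Bessel function whose argument is linear in $\gamma$. First I would insert the approximation \eqref{eq3} into \eqref{eq1}, so that proving \eqref{eq15} reduces to evaluating the single integral
\begin{equation}
J=\int_{0}^{+\infty}{\rm e}^{-\vartheta\gamma}f\left(\gamma\right){\rm d}\gamma ,
\end{equation}
with $\hat{C}=1-J$. Substituting \eqref{eq14}, the two exponential factors merge into $\exp(-a\gamma)$ where $a=\vartheta+\frac{2\mu h}{\bar\gamma}$, equivalently $\bar\gamma a=\vartheta\bar\gamma+2\mu h$; this $a$ is essentially the only quantity that survives the integration.

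Next I would expand $I_{\mu-1/2}\!\left(\frac{2\mu H\gamma}{\bar\gamma}\right)$ using the series quoted immediately after \eqref{eq14}. The powers of $\gamma$ then combine to $\gamma^{2\mu-1+2n}$ in the $n$-th summand, and, interchanging summation and integration, each summand is a Gamma integral, $\int_{0}^{+\infty}\gamma^{2\mu-1+2n}{\rm e}^{-a\gamma}{\rm d}\gamma=\Gamma(2\mu+2n)/a^{2\mu+2n}$, by \cite[Equ. (3.326.2)]{b2} --- the same tool already used in \eqref{eq9} and in step ``(a)" of \eqref{eq13}. This leaves $J$ as a single infinite series in $n$.

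The genuinely new ingredient, and the step I expect to be the main obstacle, is that the series now carries the ratio $\Gamma(2\mu+2n)/\Gamma(\mu+n+\frac12)$ rather than a plain factorial. Here I would apply the Legendre duplication formula $\Gamma(2z)=\frac{2^{2z-1}}{\sqrt{\pi}}\Gamma(z)\Gamma(z+\frac12)$ at $z=\mu+n$, which rewrites that ratio as $\frac{2^{2\mu+2n-1}}{\sqrt{\pi}}\Gamma(\mu+n)$. After this substitution the factors $\sqrt{\pi}$ and the powers of $2$ cancel against the prefactor of \eqref{eq14}, the $1/\Gamma(\mu)$ in that prefactor cancels a $\Gamma(\mu)$ extracted from $\Gamma(\mu+n)$, and the residual series becomes the binomial series $\sum_{n\ge0}\frac{\Gamma(\mu+n)}{\Gamma(\mu)\,n!}x^{n}=(1-x)^{-\mu}$ with $x=\left(\frac{2\mu H}{\bar\gamma a}\right)^{2}$. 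Its convergence is guaranteed because a short computation gives $h^{2}-H^{2}=h\ge 1$ in both formats, so $|H|<h$ and hence $\frac{2\mu|H|}{\bar\gamma a}<\frac{2\mu h}{\bar\gamma a}<1$.

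Finally I would collect the algebra. The surviving prefactor equals $h^{\mu}\left(\frac{2\mu}{\bar\gamma a}\right)^{2\mu}$, so multiplying by $(1-x)^{-\mu}$ collapses everything to $J=\left(\frac{4\mu^{2}h}{(\bar\gamma a)^{2}-4\mu^{2}H^{2}}\right)^{\mu}$. Substituting $\bar\gamma a=\vartheta\bar\gamma+2\mu h$ and dividing numerator and denominator by $4\mu^{2}$ yields $J=\left(\frac{h}{(h+\vartheta\bar\gamma/(2\mu))^{2}-H^{2}}\right)^{\mu}$, which is exactly \eqref{eq15}. As a consistency check I would verify the advertised reduction: letting $\eta\to1$ sends $h\to1$ and $H\to0$ and recovers the Nakagami-$m$ expression \eqref{eq8} with $m=2\mu$ (and thence, at $\mu=\tfrac12$, the Rayleigh result \eqref{eq5}).
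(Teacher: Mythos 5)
Your proposal is correct and follows essentially the same route as the paper's proof: substitute the approximation \eqref{eq3}, expand $I_{\mu-\frac{1}{2}}$ in its power series, integrate term by term via \cite[Equ.\ (3.326.2)]{b2}, apply the Legendre duplication formula, and sum the resulting binomial series $(1-x)^{-\mu}$ — your convergence argument via $h^{2}-H^{2}=h\geq 1$ is in fact more explicit than the paper's remark. (The only slip is the motivating aside: in the Rician density \eqref{eq11} the Bessel argument is proportional to $\sqrt{\gamma}$, not to $\gamma$; this has no bearing on your actual computation.)
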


\begin{proof}
By Equ. \eqref{eq3} and Equ. \eqref{eq14}, the approximated EMI is derived as follows:
\begin{equation}
\label{eq16}
\begin{split}
C\approx&1-\int_{0}^{+\infty}\frac{2\sqrt{\pi}\mu^{\mu+\frac{1}{2}}h^\mu}{\Gamma\left(\mu\right)H^{\mu-\frac{1}{2}}}\frac{\gamma^{\mu-\frac{1}{2}}}{\bar\gamma^{\mu+\frac{1}{2}}}\exp\left(-\frac{2\mu\gamma h}{\bar\gamma}-\vartheta\gamma\right)\\
&\qquad\times I_{\mu-\frac{1}{2}}\left(\frac{2\mu H\gamma}{\bar\gamma}\right){\rm{d}}\gamma
\\
\overset{\left(a\right)}{=}&1-
\frac{2\sqrt{\pi}h^\mu}{\Gamma\left(\mu\right)\left(2h+\frac{\vartheta\bar\gamma}{\mu}\right)^{2\mu}}\sum_{n=0}^{+\infty}\frac{\Gamma\left(2\mu+2n\right)}{\Gamma\left(\mu+\frac{1}{2}+n\right)}\frac{1}{n!}\\
&\qquad\times\frac{1}{\left(\frac{2h}{H}+\frac{\bar\gamma\vartheta}{\mu H}\right)^{2n}}\\
\overset{\left(b\right)}{=}&1-\frac{h^{\mu}}{\left(h+\frac{\vartheta\bar\gamma}{2\mu}\right)^{2\mu}}\sum_{n=0}^{+\infty}\frac{\Gamma\left(\mu+n\right)}{n!\Gamma\left(\mu\right)}\frac{1}{\left(\frac{h}{H}+\frac{\vartheta\bar\gamma}{2\mu H}\right)^{2n}}\\
\overset{\left(c\right)}{=}&1-\frac{h^{\mu}}{\left(h+\frac{\vartheta\bar\gamma}{2\mu}\right)^{2\mu}}\left(1-\frac{1}{\left(\frac{h}{H}+\frac{\vartheta\bar\gamma}{2\mu H}\right)^{2}}\right)^{-\mu}\\
=&1-\left(\frac{h}{\left(h+\frac{\vartheta\bar\gamma}{2\mu}\right)^2-H^2}\right)^{\mu},
\end{split}
\end{equation} 
where the step ``(a)'' is due to \cite[Equ. (3.326.2)]{b2}, the step ``(b)'' is due to the Legendre duplication formula
\begin{equation}
\label{eq17}
\Gamma\left(z\right)\Gamma\left(z+\frac{1}{2}\right)=2^{1-2z}\sqrt{\pi}\Gamma\left(2z\right),
\end{equation} 
and the step ``(c)'' is due to the following Taylor series 
\begin{equation}
\label{EQU18}
\left(1-x^2\right)^{-m}=\sum_{n=0}^{+\infty}\frac{\Gamma\left(n+m\right)}{\Gamma\left(m\right)n!}x^{2n},\qquad |x|<1.
\end{equation} 
Notably, for any format of $\eta$-$\mu$ fading, it is easy to prove $\left|\frac{h}{H}+\frac{\vartheta\bar\gamma}{2\mu H}\right|>1$ with the expressions listed on the last page, thus Equ. \eqref{EQU18} can be directly utilized.
\end{proof}
By \cite{b5}, the Nakagami-$m$ fading, given in Equ. \eqref{eq7}, can be obtained from anyone
of the two formats of $\eta$-$\mu$ fading: 1) for Format 1, $\mu=m$ and $\eta\rightarrow{0}$ or $\eta\rightarrow{\infty}$; 2) for Format 2, $\mu=m$ and $\eta\rightarrow{+1}$ or $\eta\rightarrow{-1}$. In both formats, there are $\left(h-H\right)\rightarrow0.5$, $\frac{H}{h}\rightarrow1$ and $h\rightarrow\infty$, thus
\begin{equation}
\begin{split}
{\hat{C}}=&1-\left(\frac{h}{\left(h+\frac{\vartheta\bar\gamma}{2\mu}\right)^2-H^2}\right)^{\mu}\\
=&1-\left(\frac{1}{\left(1+\frac{\vartheta\bar\gamma}{2\mu h}+\frac{H}{h}\right)\left(h+\frac{\vartheta\bar\gamma}{2\mu}-H\right)}\right)^{\mu}\\
=&1-\left(\frac{\mu}{\mu+\vartheta\bar\gamma}\right)^{\mu}=1-\left(\frac{m}{m+\vartheta\bar\gamma}\right)^{m},
\end{split}
\end{equation}
which is consistent with Equ. \eqref{eq8}.

\subsection{$\kappa$-$\mu$ Fading Scenario}
The PDF of received SNR in $\kappa$-$\mu$ fading channels is formulated as \cite{b5}
\begin{equation}
\label{eq18}
\begin{split}
f\left(\gamma\right)=&\frac{\mu\left(1+\kappa\right)^{\frac{\mu+1}{2}}}{\kappa^{\frac{\mu-1}{2}}\exp\left(\mu\kappa\right)}\frac{\gamma^{\frac{\mu-1}{2}}}{\bar\gamma^{\frac{\mu+1}{2}}}
\exp\left(-\frac{\mu\left(1+\kappa\right)\gamma}{\bar\gamma}\right)\\
&\times I_{\mu-1}\left(2\mu\sqrt{\frac{\kappa\left(\kappa+1\right)\gamma}{\bar\gamma}}\right).
\end{split}
\end{equation}
\begin{lemma}
The ergodic mutual information under $\kappa$-$\mu$ fading is approximated as
\begin{equation}
\label{eq19}
\begin{split}
{\hat{C}}=1-\frac{1}{\left(1+\frac{\vartheta\bar\gamma}{\mu\left(1+\kappa\right)}\right)^{\mu}}
\exp\left(\frac{\mu\kappa}{1+\frac{\vartheta\bar\gamma}{\mu\left(1+\kappa\right)}}-\mu\kappa\right).
\end{split}
\end{equation} 
\end{lemma}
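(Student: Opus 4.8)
The plan is to mirror the argument used for the Rician case in Lemma 3, noting that the $\kappa$-$\mu$ density in Equ. \eqref{eq18} generalizes the Rician density in Equ. \eqref{eq11} (the latter is recovered by $\mu=1$, $K=\kappa$). Starting from the approximation in Equ. \eqref{eq3} and the density in Equ. \eqref{eq18}, I would substitute both into the defining integral Equ. \eqref{eq1}, producing an integral of $\gamma^{(\mu-1)/2}\exp(-(\mu(1+\kappa)/\bar\gamma+\vartheta)\gamma)$ against $I_{\mu-1}(2\mu\sqrt{\kappa(\kappa+1)\gamma/\bar\gamma})$.

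The key step is to expand $I_{\mu-1}$ via its series representation $I_\nu(z)=(z/2)^\nu\sum_{n\ge0}(z^2/4)^n/(n!\,\Gamma(\nu+n+1))$ stated below Equ. \eqref{eq14}. After this expansion the $n$-th integrand is, up to constants, $\gamma^{\mu-1+n}\exp(-(\mu(1+\kappa)/\bar\gamma+\vartheta)\gamma)$; interchanging summation and integration (legitimate since all terms are non-negative) and applying \cite[Equ. (3.326.2)]{b2} gives the factor $\Gamma(\mu+n)/(\mu(1+\kappa)/\bar\gamma+\vartheta)^{\mu+n}$ in each term. The point is that this $\Gamma(\mu+n)$ cancels the $\Gamma(\mu+n)$ coming from the Bessel-series denominator, so the surviving sum has the form $\sum_{n\ge0}x^n/n!$, which collapses to $\exp(x)$ by the same Taylor series used in step ``(b)'' of the proof of Lemma 3.

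What remains is purely algebraic: collecting the powers of $\mu$, $\kappa$, $1+\kappa$ and $\bar\gamma$ from the density prefactor, from the Bessel prefactor $(z/2)^{\mu-1}$, and from $(z^2/4)^n$, and checking that they combine to $e^{-\mu\kappa}A^\mu$ with $A=\mu(1+\kappa)/\bar\gamma$, times $(A+\vartheta)^{-\mu}$, while the exponent equals $\mu\kappa A/(A+\vartheta)$. Rewriting $A/(A+\vartheta)=1/(1+\vartheta\bar\gamma/(\mu(1+\kappa)))$ then reproduces Equ. \eqref{eq19} verbatim.

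The main obstacle I anticipate is bookkeeping rather than anything conceptual: the density in Equ. \eqref{eq18} carries half-integer powers $(\mu\pm1)/2$ of several quantities, and one must verify they cancel against the $(z/2)^{\mu-1}$ Bessel prefactor to leave only integer exponents and a clean closed form. Unlike the $\eta$-$\mu$ case in Lemma 5, no convergence restriction of the type $|x|<1$ is required, because the relevant series is the full exponential series; hence this argument is in fact shorter than that of Lemma 5. As sanity checks, substituting $\mu=1$ and $\kappa=K$ into Equ. \eqref{eq19} should recover Equ. \eqref{eq12}, and letting $\kappa\to0$ should recover Equ. \eqref{eq8} with $m=\mu$.
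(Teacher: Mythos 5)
Your proposal is correct and follows essentially the same route as the paper's proof: expand $I_{\mu-1}$ by its power series, integrate term by term via \cite[Equ. (3.326.2)]{b2} so that $\Gamma(\mu+n)$ cancels, and collapse the remaining sum with the exponential Taylor series, exactly as in the paper's Equ. \eqref{eq20}. The algebraic bookkeeping you describe (prefactors combining to $(\mu(1+\kappa)/\bar\gamma)^{\mu}e^{-\mu\kappa}$ and the ratio rewriting) checks out, so no gap remains.
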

\begin{proof}
On the basis of Equ. \eqref{eq3} and Equ. \eqref{eq18},
\begin{equation}
\label{eq20}
\begin{split}
C\approx&1-\mu^{\mu-1}\left(\frac{\kappa\left(\kappa+1\right)}{\bar\gamma}\right)^{\frac{\mu-1}{2}}\frac{\mu\left(1+\kappa\right)^{\frac{\mu+1}{2}}}{\kappa^{\frac{\mu-1}{2}}\exp\left(\mu\kappa\right)}\frac{1}{\bar\gamma^{\frac{\mu+1}{2}}}\\
&\times\int_{0}^{+\infty}\sum_{n=0}^{+\infty}\exp\left(-\left(\frac{\mu\left(1+\kappa\right)}{\bar\gamma}+\vartheta\right)\gamma\right)\\
&\times\mu^{2n}\frac{1}{n!\Gamma\left(\mu+n\right)}\left(\frac{\kappa\left(\kappa+1\right)}{\bar\gamma}\right)^n\gamma^{n+\frac{\mu-1}{2}+\frac{\mu-1}{2}}{\rm{d}}\gamma\\
=&1-\frac{\exp\left(-\mu\kappa\right)}{\left(1+\frac{\vartheta\bar\gamma}{\mu\left(1+\kappa\right)}\right)^{\mu}}\sum_{n=0}^{+\infty}
\frac{1}{n!}\left(\frac{\mu\kappa}{1+\frac{\vartheta\bar\gamma}{\mu\left(1+\kappa\right)}}\right)^n\\
=&1-\frac{1}{\left(1+\frac{\vartheta\bar\gamma}{\mu\left(1+\kappa\right)}\right)^{\mu}}
\exp\left(\frac{\mu\kappa}{1+\frac{\vartheta\bar\gamma}{\mu\left(1+\kappa\right)}}-\mu\kappa\right).
\end{split}
\end{equation} 
It should be noted that \cite[Equ. (3.326.2)]{b2} is utilized during the calculation of this integral.
\end{proof}

It is known that $\kappa$-$\mu$ fading comprises both Rician ($\mu=1$) and Nakagami-$m$ ($\kappa\rightarrow0$) \cite{b5}. When $\mu=1$, Equ. \eqref{eq20} turns into
\begin{equation}
\label{eq21}
\begin{split}
C\approx&1-\frac{1}{1+\frac{\vartheta{\bar\gamma}}{\kappa+1}}\exp\left(\frac{\kappa}{1+\frac{\vartheta{\bar\gamma}}{\kappa+1}}-\kappa\right),
\end{split}
\end{equation} 
which is consistent with Equ. \eqref{eq12}. In addition, Nakagami fading arises form Equ. \eqref{eq18} for $\kappa\rightarrow0$, thus its ergodic mutual information reads
\begin{equation}
\label{eq22}
\begin{split}
C\approx&1-\left(\frac{\mu}{\bar\gamma}\right)^\mu
\frac{1}{\left(\frac{\mu}{\bar\gamma}+\vartheta\right)^{\mu}}=1-\left(\frac{\mu}{\mu+\vartheta\bar\gamma}\right)^\mu,
\end{split}
\end{equation} 
which accords with Equ. \eqref{eq8}.

\section{Simulation}  
\label{section3}
In this part, numerically evaluated results are plotted and compared with Monte Carlo simulation 
to verify the accuracy of the former derivations.
Notably, Rayleigh, Rician, $\eta$-$\mu$ and $\kappa$-$\mu$ fading can all be generated from Gaussian randoms, and the Nakagami-$m$ fading can be directly obtained by the Matlab function $random()$.
\begin{figure}[!t] 
\setlength{\abovecaptionskip}{-4pt}  
\centering 
\includegraphics[width=0.45\textwidth]{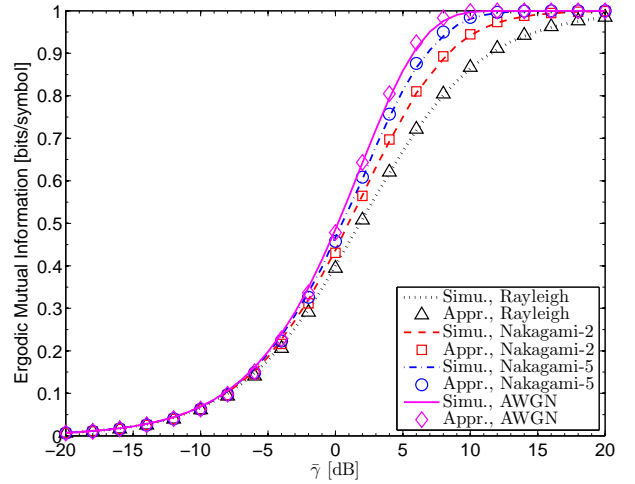} 
\caption{Approximated (Appr.) and simulated (Simu.) ergodic mutual information for Rayleigh and Nakagami-$m$ fadings.}
\label{figure1}
\vspace{-4pt}
\end{figure}

\begin{figure}[!t] 
\setlength{\abovecaptionskip}{-4pt}  
\centering 
\includegraphics[width=0.45\textwidth]{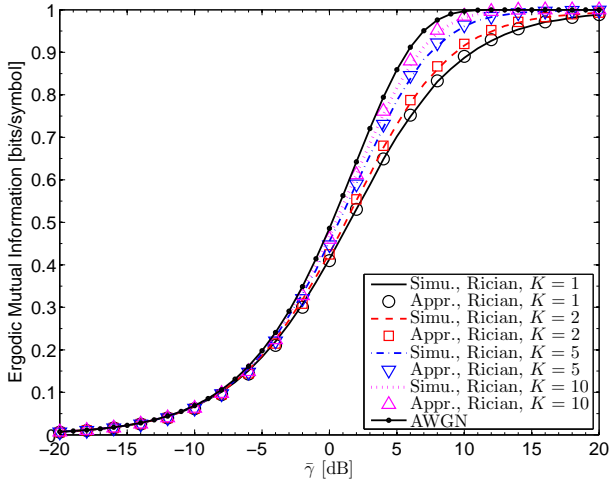} 
\caption{Approximated and simulated ergodic mutual information for Rician fading.}
\label{figure2}
\vspace{-4pt}
\end{figure}

\begin{figure*}[!t]   
    \centering
\setlength{\abovecaptionskip}{-4pt}
    \subfigure[Format 1]
    {
        \includegraphics[width=0.45\textwidth]{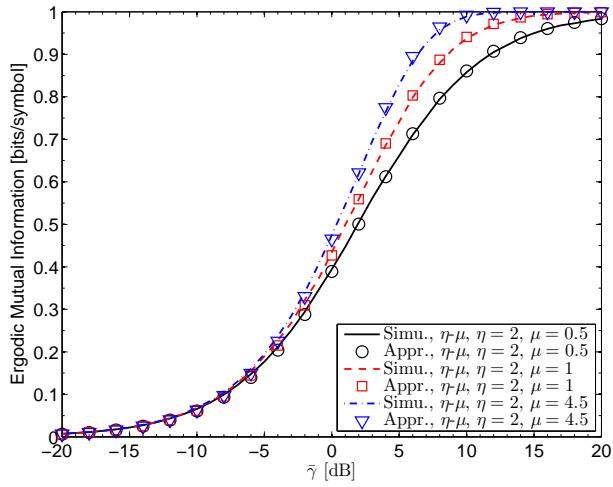}
	   \label{fig3a}	   
    } 
   \subfigure[Format 2]
    {
        \includegraphics[width=0.45\textwidth]{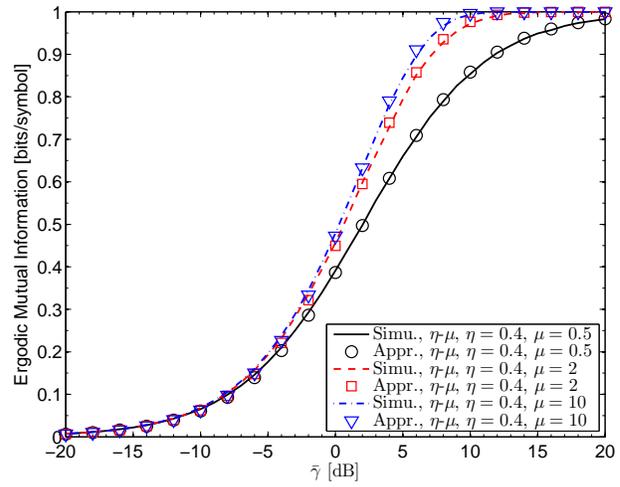}
	   \label{fig3b}	   
    } 
\\
    \caption{Approximated and simulated ergodic mutual information for $\eta$-$\mu$ fading. The results of Format 1 and Format 2 are illustrated in {\figurename} \ref{fig3a} and {\figurename} \ref{fig3b}, respectively.}
    \label{figure3}
	\vspace{-4pt}
\end{figure*}

\begin{figure}[!t] 
\setlength{\abovecaptionskip}{-4pt}  
\centering 
\includegraphics[width=0.45\textwidth]{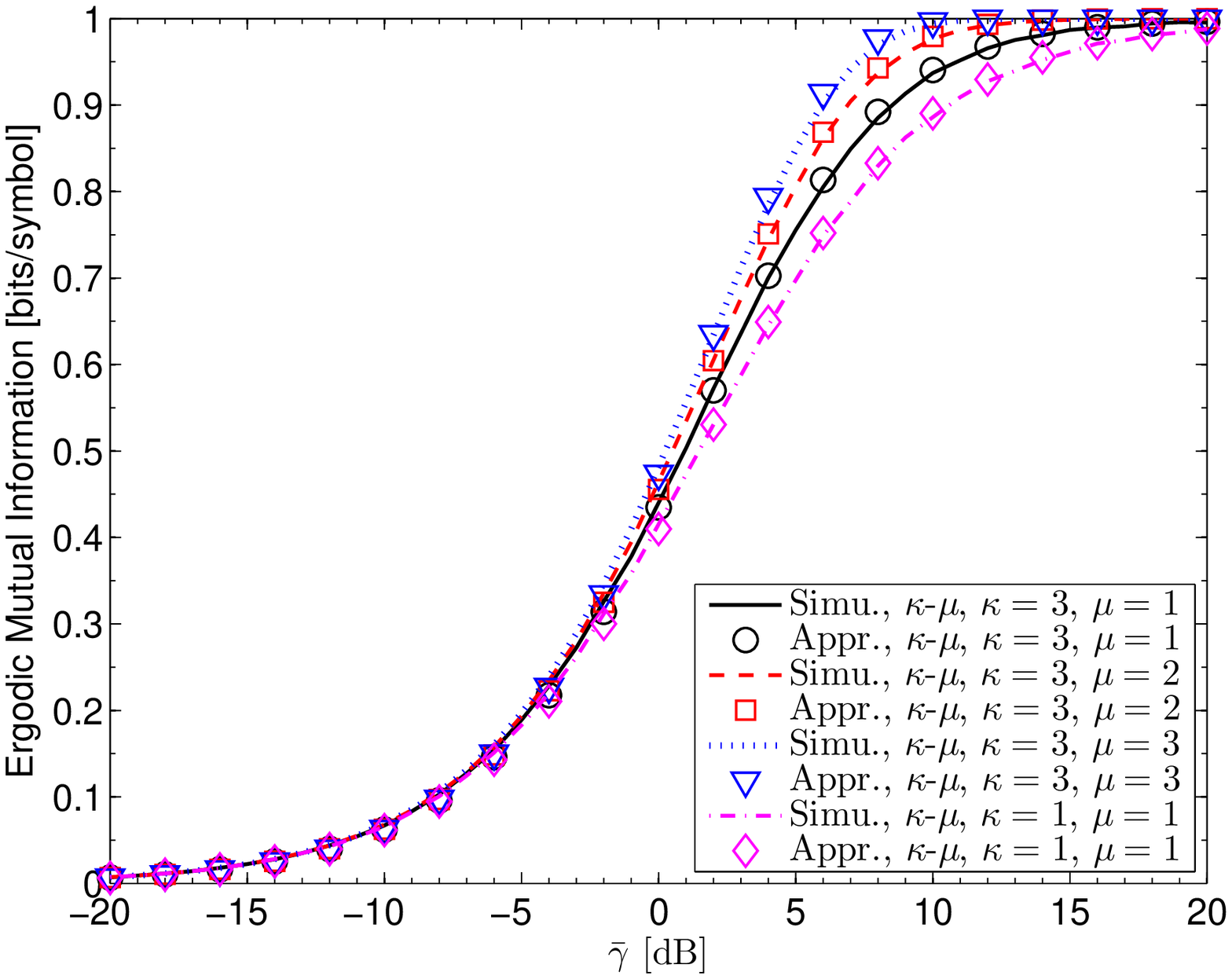} 
\caption{Approximated and simulated ergodic mutual information for $\kappa$-$\mu$ fading.}
\label{figure4}
\vspace{-4pt}
\end{figure}

{\figurename} \ref{figure1} compares the approximated and simulated ergodic mutual information in terms of $\bar\gamma$ for AWGN, Rayleigh and Nakagammi-$m$ channels. Firstly, let us focus on the scenario of AWGN, whose approximated values are calculated by Equ. \eqref{eq3}. As it shows, the simulated results meet accurately with the derivations, which suggests that Equ. \eqref{eq3} possesses high approximation precision and it is accurate enough to apply Equ. \eqref{eq3} into the estimation of mutual information. Most importantly, it can be observed that the approximation matches well with the empirical distribution for both Rayleigh and Nakagami-$m$ fading. Besides, it can be seen from this figure that the Nakagami-$m$ fading channel tends to be AWGN as $m$ grows. Then, {\figurename} \ref{figure2} shows the ergodic mutual information as a function of $\bar\gamma$ under Rician fading for selected values of $K$. It can be seen that the derivations meet tightly the results given via numerical simulations for all $K$ and SNR. Moreover, with the increment of $K$, the negative effect from multi-path fading degrades, which contributes to the convergence trend toward AWGN in the graph above. Later, {\figurename} \ref{figure3} and {\figurename} \ref{figure4} move on to illustrate more generalized fading types, i.e. $\eta$-$\mu$ and $\kappa$-$\mu$ fading. In {\figurename} \ref{figure3}, both Format 1 and Format 2 are presented to validate Equ. \eqref{eq15}. As can be seen from both {\figurename} \ref{figure3} and {\figurename} \ref{figure4}, the proposed approximation achieves a good agreement with the simulation. 

\section{Conclusion}
\label{section4}
This paper gives some closed-form approximate formulas of the ergodic mutual information in generalized fading channels under BPSK modulation. 
For each fading scenario, our derivations can meet tightly the empirical results. Numerical experiments suggest that our approximation provides a simple but numerically efficient way to calculate the mutual information of generalized fading channels.

\vspace{12pt}
\end{document}